\newtheorem{theorem}{Theorem}
\newtheorem{definition}[theorem]{Definition}
\newtheorem{lemma}[theorem]{Lemma}
\newenvironment{proof}[1][Proof]{\textbf{#1.} }{\ \rule{0.5em}{0.5em}}
\definecolor{green}{rgb}{0.00,0.50,0.00}
\begin{document}

\title{Quantum Filtering in Coherent States}
\author{John E. Gough$^a$, Claus K\"{o}stler$^b$ \\
Institute of Mathematics and Physics, Aberystwyth University, SY23 3BZ, Wales, United Kingdom \\
a) \textbf{jug@aber.ac.uk}, \qquad b) \textbf{cck@aber.ac.uk} }
\date{}



\maketitle
\begin{abstract}
We derive the form of the Belavkin-Kushner-Stratonovich equation describing the filtering of a continuous observed quantum system via non-demolition measurements when the statistics of the input field used for the indirect measurement are in a general coherent state.
\end{abstract}

\begin{center}
\textbf{Dedicated to Robin Hudson on the occasion his 70th birthday.}
\end{center}

\section{Introduction}
One of the most remarkable consequences of Hudson-Parthasarathy quantum stochastic calculus \cite{HP} is Belavkin's formulation of a quantum theory of filtering based on non-demolition measurements of an output field that has interacted with a given system \cite{Bel80,Bel92,Bel92a,Bel97}. Specifically, we must measure a particular feature of the field, for instance a field quadrature or the count of the field quanta, and this determines a self-commuting, therefore \emph{essentially classical}, stochastic process. The resulting equations have structural similarities with the classical analogues appearing in the work of Kallianpur, Striebel, Kusnher, Stratonovich, Zakai, Duncan and Mortensen on nonlinear filtering, see \cite{DavisMarcus,Kush79,Kush80,Zak69}. AS a consequence, earlier models of repeated quantum photon counting measurements developed by Davies \cite{Dav69,Dav76} could be realized through a concrete theory: this was first shown by taking the pure-jump process limit of diffusive quantum filtering problems \cite{BarBel91}. 

\bigskip
\indent
There has been recent interest amongst the physics community in quantum filtering as an applied technique in quantum feedback and control \cite{AASDM02,Bar90,BouEdBel05,BouGutMaa04,BvH Ref Prob,DHJMT00,GouBelSmol,Jam04,vHSM05,WM93}. An additional driver is the desire to go beyond the situation of a vacuum field and derive the filter for other physically important states such as  thermal, squeezed, single photon states, etc. In this note we wish to present the filter for non-demolition quadrature and photon-counting measurements when the choice of state for the input field is a coherent state with intensity function $\beta$. The resulting filters are a deformation of the vacuum filters and reduce to the latter when we take $\beta \equiv 0$, this is perhaps to be expected given that the coherent states have a continuous-in-time tensor product factorization property. We derive the filters using the reference probability approach, as well as the characteristic function approach.

\subsection{Classical Non-linear Filtering}

We consider a state based model where the state $X_{t}$ evolves according to
a stochastic dynamics and we make noisy observations $Y_{t}$ on the state.
The dynamics-observations equations are the SDEs 
\begin{eqnarray}
dX_{t} &=&v\left( X_{t}\right) dt+\sigma _{X}\left( X_{t}\right) dW_{t}^{%
\mathrm{proc}},  \label{pairsys} \\
dY_{t} &=&h\left( X_{t}\right) dt+\sigma _{Y}dW_{t}^{\mathrm{obs}}
\label{pairobs}
\end{eqnarray}
and we assume that the process noise $W^{\mathrm{proc}}$ and the observation
noise $W_{\mathrm{obs}}$ are uncorrelated multi-dimensional Wiener
processes. The generator of the state diffusion is then 
\begin{equation*}
\mathcal{L}=v^{i}\partial _{i}+\frac{1}{2}\Sigma _{XX}^{ij}\partial _{ij}^{2}
\end{equation*}
where $\Sigma _{XX}=\sigma _{X}\sigma _{X}^{\top }$. The aim of filtering
theory to obtain a least squares estimate for the state dynamics. More
specifically, for any suitable function $f$ of the state, we would like to
evaluate the conditional expectation 
\begin{equation*}
\pi _{t}\left( f\right) :=\mathbb{E}[ f\left( X_{t}\right) | \mathcal{F}_{t]}^{Y}],
\end{equation*}
with $\mathcal{F}_{t]}^{Y}$ being the $\sigma $-algebra generated by the
observations up to time $t$.

\subsubsection{Kallianpur-Striebel Formula}

By introducing the \emph{Kallianpur-Striebel likelihood function }
\begin{equation*}
L_{t}\left( \mathbf{x}|\mathbf{y}\right) =\exp \int_{0}^{t}\left\{ h\left(
x_{s}\right) ^{\top }dy_{s}-\frac{1}{2}h\left( x_{s}\right) ^{\top }h\left(
x_{s}\right) ds\right\}
\end{equation*}
for sample state path $\mathbf{x}=\left\{ x_{s}:0\leq s\leq t\right\} $
conditional on a given sample observation $\mathbf{y}=\left\{ y_{s}:0\leq s\leq t\right\} $%
, we may represent the conditional expectation as 
\begin{equation*}
\pi _{t}\left( f\right) =\left. \frac{\int_{C_{0}^{x}\left[ 0,t\right]
}f\left( x_{t}\right) L_{t}\left( \mathbf{x}|\mathbf{y}\right) \mathbb{P}[d%
\mathbf{x}]}{\int_{C_{0}^{x}\left[ 0,t\right] }L_{t}\left( \mathbf{x}|%
\mathbf{y}\right) \mathbb{P}[d\mathbf{x}]}\right| _{\mathbf{y}=Y\left(
\omega \right) }=\frac{\sigma _{t}\left( f\right) }{\sigma _{t}\left(
1\right) }
\end{equation*}
where $\mathbb{P}$ is canonical Wiener measure and 

\begin{equation*}
\sigma _{t}\left( f\right) \left( \omega \right) =\int_{C_{0}^{x}\left[ 0,t%
\right] }f\left( x_{t}\right) L_{t}\left( \mathbf{x}|Y\left( \omega \right)
\right) \mathbb{P}\left[ d\mathbf{x}\right] .
\end{equation*}

\subsubsection{Duncan-Mortensen-Zakai and Kushner-Stratonovich Equations}

Using the It\={o} calculus, we may obtain the\emph{ Duncan-Mortensen-Zakai
equation} for the un-normalized filter $\sigma _{t}\left( f\right) $, and the
\emph{Kushner-Stratonovich} equation for the normalized version $\pi _{t}\left(
f\right) $. These are 
\begin{eqnarray*}
d\sigma _{t}\left( f\right) &=&\sigma _{t}\left( \mathcal{L}f\right)
dt+\sigma _{t}\left( fh^{\top }\right) dY_{t}, \\
d\pi _{t}\left( f\right) &=&\pi _{t}\left( \mathcal{L}f\right) dt+\left[ \pi
_{t}\left( fh^{\top }\right) -\pi _{t}\left( f\right) \pi _{t}\left( h^{\top
}\right) \right] dI_{t},
\end{eqnarray*}
where $\left( I_{t}\right) $ are the innovations: 
\begin{equation*}
dI_{t}:=dY_{t}-\pi _{t}\left( h\right) dt,\;I\left( 0\right) =0.
\end{equation*}
We note that there exist variants of these equations for more general
processes than diffusions (in particular for point processes which will be of relevance for photon counting), and for the case where the process and
observation noises are correlated.

\subsubsection{Pure versus Hybrid Filtering Problems}

We remark that we follow the traditional approach of adding direct Wiener
noise $W_{\mathrm{obs}}$ to the observations. We could of course consider a
more general relation of the form $dY_{t}=h\left( X_{t}\right) dt+\sigma
_{Y}dW_{t}^{\mathrm{obs}}$ but for constant coefficients $\sigma _{Y}$ a
simple rearrangement returns us to the above setup.

\bigskip

The situation where we envisage $dY_{t}=h\left( X_{t}\right) dt+\sigma
_{Y}\left( X_{t}\right) dW_{t}^{\mathrm{obs}}$, with $\sigma _{Y}$ a known
function of the unobserved state, must be considered as being \textit{too
good to be true} since we can then obtain information about the unobserved
state by just examining the quadratic variation of the observations process,
since we then have $dY\left( t\right) dY\left( t\right) ^{\top }=\sigma
_{Y}(X_{t})\sigma _{Y}\left( X_{t}\right) ^{\top }dt$. For instance, in the
case of scalar processes, if we have $\sigma _{Y}\left( X\right) =\gamma |X|$
then knowledge of the quadratic variation yields the magnitude $|X_{t}|$ of
the signal without any need for filtering. Such situations rarely if
ever arise in practice, and one naturally restricts to pure filtering problems.

\section{Quantum Filtering}

We wish to describe the quantum mechanical analogue of the classical
filtering problem. To begin with, we note that in quantum theory the physical
degrees of freedom are modeled as observables, that is self-adjoint
operators on a fixed Hilbert space $\mathfrak{h}$. The observables will
generally not commute with each other. In place of the classical notion of a
state, we will have a normalized vector $\psi \in \mathfrak{h}$ and the averaged
of an observable $X$ will be give by the real number $\langle \psi |X\psi
\rangle $. (Here we following the physicist convention of taking the inner
product $\langle \psi |\phi \rangle $ to be linear in the second argument $%
\phi $ and conjugate linear in the first $\psi $.) More generally we define
a quantum state to be a positive, normalized linear functional $\mathbb{E}$
on the set of operators. Every such expectation may be written as 
\begin{equation*}
\mathbb{E}\left[ X\right] =\mathrm{tr}_{\mathfrak{h}}\left\{ \varrho X\right\}
\end{equation*}
where $\varrho $ is a positive trace-class operator normalized so that tr$_{%
\mathfrak{h}}\varrho =1$. The operator $\varrho $ is referred to as a \textit{%
density matrix}. The set of all states is a convex set whose extreme points
correspond to the density matrices that are rank-one projectors onto the
subspace spanned by a unit vectors $\psi \in \mathfrak{h}$.

\bigskip

To make a full analogy with classical theory, we should exploit the
mathematical framework of quantum probability which gives the appropriate
generalization of probability theory and stochastic processes to the quantum
setting. The standard setting is in terms of a von Neumann algebra of
observables over a fixed Hilbert space, which will generalize the notion of
an algebra of bounded random variables, and take the state to be an
expectation functional which is continuous in the normal topology. The
latter condition is equivalent to the $\sigma $-finiteness assumption in
probability theory and results in all the states of interest being
equivalent to a density matrix.

\bigskip

In any given experiment, we may only measure commuting observables. Quantum
estimation theory requires that the only observables that we may estimate
based on a particular experiment are those which commute with the measured
observables. In practice, we do not measure a quantum system directly, but
apply an input field and measure a component of the output field. The input
field results in a open dynamics for the system while measurement of the
output ensures that we met so called non-demolition conditions which
guarantee that quantum measurement process itself does not destroy the
statistical features which we would like to infer. We will now describe
these elements in more detail below.

\subsection{Quantum Estimation}

We shall now describe the reference probability approach to quantum
filtering. Most of our conventions following the presentation of Bouten and
van Handel \cite{BvH Ref Prob}.

\bigskip

Let $\mathfrak{A}$ be a von Neumann algebra and $\mathbb{E}$ be a normal state. In a given experiment one may only measure a set of \textit{commuting} observables $\{Y_{\alpha }:\alpha \in A\}$. Define the measurement algebra to be the commutative von Neumann algebra generated by the chosen observables 
\begin{equation*}
\mathfrak{M}=\mathrm{vN}\{Y_{\alpha }:\alpha \in A\}\subset \mathfrak{A}.
\end{equation*}

We may estimate an observable $X\in \mathfrak{A}$ from an experiment with measurement algebra $\mathfrak{M}$ if and only if 
\begin{equation*}
X\in \mathfrak{M}^{\prime }:= \{A\in \mathfrak{A}:[A,Y]=0,\,\forall \,Y\in \mathfrak{M}%
\},
\end{equation*}
That is, if it is physically possible to measure $X$ in addition to all the $Y_{\alpha }$. Therefore the algebra vN$\{X,Y_{\alpha }:\alpha \in A\}$ must again be commutative. We may then set about defining the conditional expectation of estimable observables onto the measurement algebra.

\begin{definition}
For commutative von Neumann algebra $\mathfrak{M}$, the conditional expectation onto $\mathfrak{M}$\ is the map 
\[
\mathbb{E}[\cdot \mid \mathfrak{M}]:\mathfrak{M}^{\prime }\mapsto \mathfrak{M}
\]
by 
\begin{equation}
\mathbb{E}[\,\mathbb{E}[X \mid \mathfrak{M}]\,Y]\equiv \mathbb{E}[XY],\,\forall
Y\in \mathfrak{M}.
\end{equation}
\end{definition}

\bigskip

In contrast to the general situation regarding conditional expectations in the non-commutative setting of von Neumann algebras \cite{Tak71}, this particular definition is always nontrivial insofar as existence is guaranteed. Introducing the norm $\Vert A\Vert ^{2}:=\mathbb{E}[A^{\dag }A]$, we see that the conditional expectation always exists and is unique up to norm-zero terms. It moreover satisfies the least squares property 
\begin{equation*}
\Vert X-\mathbb{E}[X\mid \mathfrak{M}]\Vert \leq \Vert X-Y\Vert ,\quad \forall
Y\in \mathfrak{M}.
\end{equation*}
As the set vN$\left\{ X,Y_{\alpha }:\alpha \in A\right\} $ is a commutative von Neumann algebra for each $X\in \mathfrak{M}^{\prime }$, it will be isomorphic to the space of bounded functions on a measurable space by Gelfand's theorem. The state induces a probability measure on this space and we may obtain the standard conditional expectation of the random variable corresponding to $X$ onto the $\sigma $-algebra generated by the functions corresponding to the $Y_{\alpha }$. This classical conditional expectation then corresponds to a unique element $\mathbb{E}[X\mid \mathfrak{M}]\in \mathfrak{M}$ and this gives the construction of the quantum conditional expectation. 

\bigskip

We sketch the conditional expectation in figure 1.  Note that while this may seem trivial at first sight, it should be stressed that the commutant $\mathfrak{M}^{\prime }$ itself will typically be a non-commutative algebra, so that while our measured observables commute, and what we wish to estimate must commute with our measured observables, the object we can estimate need not commute amongst themselves.

\begin{figure}[tbph]
\centering
\setlength{\unitlength}{.04cm} 
\begin{picture}(120,60)
\label{pic2}
\thicklines
\put(60,25){\oval(100,60)}
\put(60,25){\oval(80,20)}
\thinlines
\put(70,45){\vector(0,-1){22}}
\put(70,45){\circle*{2}}
\put(70,21.5){\circle*{2}}
\put(60,42){$X$}
\put(38,20){$\mathbb{E} [X | \mathfrak{M} ]$}
\put(107,50){$\mathfrak{M}'$}
\put(97,35){$\mathfrak{M}$}
\end{picture}
\caption{Quantum Conditional Expectation $\mathbb{E} [ \cdot | \mathfrak{M} ]$}
\end{figure}
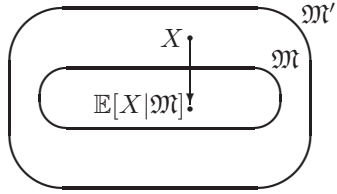

The following two lemmas will be used extensively, see \cite{BvH Ref Prob} and \cite{BvHJ}.

\begin{lemma}[Unitary rotations]
\label{unitary rotations} Let $U$ be unitary and define $\mathbb{\tilde{E}}%
\left[ X\right] := \mathbb{E}\left[ U^{\dag }XU\right] $ and let $%
\mathfrak{\tilde{M}}=U^{\dag }\mathfrak{M}U$. Then 
\[
\mathbb{E} [ U^{\dag }XU \mid \mathfrak{\tilde{M}} ] =U^{\dag }\mathbb{\tilde{E}}%
\left[ X\mid \mathfrak{M}\right] U.
\]
\end{lemma}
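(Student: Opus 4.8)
The plan is to verify the defining property of the conditional expectation directly, using the definition applied to the algebra $\mathfrak{\tilde M}$ and the state $\mathbb{\tilde E}$. Recall that $\mathbb{E}[X\mid\mathfrak M]$ is characterized as the unique element $Z\in\mathfrak M$ with $\mathbb{E}[ZY]=\mathbb{E}[XY]$ for all $Y\in\mathfrak M$. So to prove the lemma I must show two things: first, that $U^{\dag}\mathbb{\tilde E}[X\mid\mathfrak M]U$ genuinely lies in $\mathfrak{\tilde M}=U^{\dag}\mathfrak M U$; and second, that it satisfies the averaging identity $\mathbb{E}[(U^{\dag}\mathbb{\tilde E}[X\mid\mathfrak M]U)\,\tilde Y]=\mathbb{E}[U^{\dag}XU\,\tilde Y]$ for every $\tilde Y\in\mathfrak{\tilde M}$. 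Uniqueness of the conditional expectation (up to norm-zero terms, as noted in the excerpt) then forces the stated equality.

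The first point is immediate: since $\mathbb{\tilde E}[X\mid\mathfrak M]\in\mathfrak M$ by definition of the conditional expectation onto $\mathfrak M$ (with respect to the state $\mathbb{\tilde E}$), conjugating by $U$ lands it in $U^{\dag}\mathfrak M U=\mathfrak{\tilde M}$. I should also check the domain condition, namely that $U^{\dag}XU\in\mathfrak{\tilde M}'$ whenever $X\in\mathfrak M'$; this holds because $\mathfrak{\tilde M}'=U^{\dag}\mathfrak M' U$ (conjugation by a unitary is an algebra automorphism and hence commutes with taking commutants), so $X\in\mathfrak M'$ gives $U^{\dag}XU\in U^{\dag}\mathfrak M'U=\mathfrak{\tilde M}'$. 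For the second point, take an arbitrary $\tilde Y\in\mathfrak{\tilde M}$ and write it as $\tilde Y=U^{\dag}YU$ with $Y\in\mathfrak M$. Then
\[
\mathbb{E}\bigl[(U^{\dag}\mathbb{\tilde E}[X\mid\mathfrak M]U)(U^{\dag}YU)\bigr]
=\mathbb{E}\bigl[U^{\dag}\,\mathbb{\tilde E}[X\mid\mathfrak M]\,Y\,U\bigr]
=\mathbb{\tilde E}\bigl[\mathbb{\tilde E}[X\mid\mathfrak M]\,Y\bigr]
=\mathbb{\tilde E}[XY]
=\mathbb{E}[U^{\dag}XYU]
=\mathbb{E}\bigl[(U^{\dag}XU)(U^{\dag}YU)\bigr],
\]
where the second and fourth equalities use the definition $\mathbb{\tilde E}[\,\cdot\,]=\mathbb{E}[U^{\dag}(\cdot)U]$, the middle equality uses the defining averaging property of $\mathbb{\tilde E}[\,\cdot\mid\mathfrak M]$ with respect to the state $\mathbb{\tilde E}$ (valid since $Y\in\mathfrak M$), and $U^{\dag}U=1$ throughout. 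This is exactly the averaging identity required of $\mathbb{E}[U^{\dag}XU\mid\mathfrak{\tilde M}]$.

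There is no real obstacle here; the lemma is essentially a bookkeeping exercise in unwinding definitions, and the only thing one has to be slightly careful about is the logical order — one must know a priori that the conditional expectation $\mathbb{E}[\,\cdot\mid\mathfrak{\tilde M}]$ exists (guaranteed by the construction sketched in the excerpt, applied to the pair $(\mathfrak{\tilde M},\mathbb{E})$) before invoking its uniqueness to conclude. One should also note that both sides of the claimed identity depend on the relevant states only through their restrictions, so the ``up to norm-zero terms'' caveat is understood: the norm $\|A\|^2=\mathbb{E}[A^{\dag}A]$ on the target side corresponds under conjugation by $U$ to the norm $\mathbb{\tilde E}[A^{\dag}A]$ on the source side, so norm-zero ambiguities match up consistently. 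Writing out these remarks, together with the displayed chain of equalities, completes the proof.
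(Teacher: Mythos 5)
Your proof is correct: the paper itself states this lemma without proof (deferring to the cited references), and your argument — checking that $U^{\dag}\mathbb{\tilde E}[X\mid\mathfrak M]U$ lies in $\mathfrak{\tilde M}$ and verifying the defining averaging identity against arbitrary $\tilde Y=U^{\dag}YU$, then invoking uniqueness — is exactly the standard verification found in those references. The auxiliary observations (that $\mathfrak{\tilde M}'=U^{\dag}\mathfrak M'U$ so the domain condition is preserved, and that the norm-zero ambiguities for $\mathbb{E}$ and $\mathbb{\tilde E}$ correspond under conjugation by $U$) are correct and appropriately careful.
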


Here we think of going from the Schr\"{o}dinger picture where the state $\mathbb{E}$ is fixed and observables evolve to $U^{\ast }XU$, to the Heisenberg picture where the state evolves to $\mathbb{\tilde{E}}$ and the  observables are fixed. Lemma \ref{unitary rotations} tells us how we may transform the conditional expectation between these two pictures.

\begin{lemma}[Quantum Bayes' formula]
\label{QBayes} Let $F\in \mathfrak{M}^{\prime }$ with $\mathbb{E}\left[ F^{\dag
}F\right] =1$ and set $\mathbb{E}_{F}\left[ X\right] := \mathbb{E}
\left[ F^{\dag }XF\right] $. Then 
\[
\mathbb{E}_{F}\left[ X\mid \mathfrak{M}\right] =\frac{\mathbb{E}\left[ F^{\dag
}XF \mid\mathfrak{M}\right] }{\mathbb{E}\left[ F^{\dag }F\mid \mathfrak{M}\right] }.
\]
\end{lemma}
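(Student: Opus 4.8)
The plan is to verify the defining property of the conditional expectation onto $\mathfrak{M}$ directly. Recall that $\mathbb{E}_F[X\mid\mathfrak{M}]$ is, by the Definition above, the unique element of $\mathfrak{M}$ satisfying $\mathbb{E}_F[\,\mathbb{E}_F[X\mid\mathfrak{M}]\,Y\,]=\mathbb{E}_F[XY]$ for all $Y\in\mathfrak{M}$. So I would set
\[
Z:=\frac{\mathbb{E}[F^{\dag}XF\mid\mathfrak{M}]}{\mathbb{E}[F^{\dag}F\mid\mathfrak{M}]}
\]
and check two things: first that $Z$ genuinely lies in $\mathfrak{M}$ and is well-defined, and second that it satisfies the characterizing identity $\mathbb{E}_F[ZY]=\mathbb{E}_F[XY]$ for every $Y\in\mathfrak{M}$. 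Well-definedness requires that the denominator $\mathbb{E}[F^{\dag}F\mid\mathfrak{M}]$ be (almost surely, i.e.\ up to norm-zero terms) invertible in $\mathfrak{M}$: since $F^{\dag}F\ge 0$ and $\mathbb{E}[\mathbb{E}[F^{\dag}F\mid\mathfrak{M}]]=\mathbb{E}[F^{\dag}F]=1>0$, the positive element $\mathbb{E}[F^{\dag}F\mid\mathfrak{M}]$ of the commutative algebra $\mathfrak{M}$ is strictly positive except on a norm-zero set, so its inverse exists in the sense needed and $Z$ is a well-defined element of $\mathfrak{M}$. Being a ratio of elements of the commutative algebra $\mathfrak{M}$, $Z$ itself lies in $\mathfrak{M}$.

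The core computation is the verification of the identity. Fix $Y\in\mathfrak{M}$. Unwinding the definition $\mathbb{E}_F[\cdot]=\mathbb{E}[F^{\dag}\cdot F]$ and using that $F\in\mathfrak{M}^{\prime}$ so that $F$ commutes with everything in $\mathfrak{M}$ (in particular with $Z$, $Y$, and $\mathbb{E}[F^{\dag}F\mid\mathfrak{M}]$), I compute
\[
\mathbb{E}_F[ZY]=\mathbb{E}[F^{\dag}ZYF]=\mathbb{E}\big[ZY\,F^{\dag}F\big]=\mathbb{E}\big[\mathbb{E}[ZY\,F^{\dag}F\mid\mathfrak{M}]\big].
\]
Since $Z,Y\in\mathfrak{M}$ they pull out of the conditional expectation by the module property, giving $\mathbb{E}[ZY\,\mathbb{E}[F^{\dag}F\mid\mathfrak{M}]]$; substituting the definition of $Z$ the factor $\mathbb{E}[F^{\dag}F\mid\mathfrak{M}]$ cancels against the denominator, leaving $\mathbb{E}\big[Y\,\mathbb{E}[F^{\dag}XF\mid\mathfrak{M}]\big]$. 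Applying the tower property / definition of conditional expectation once more with the $\mathfrak{M}$-element $Y$ gives $\mathbb{E}[Y\,F^{\dag}XF]=\mathbb{E}[F^{\dag}XFY]=\mathbb{E}_F[XY]$, where in the last steps I again used that $F$ and $Y$ commute. Hence $Z$ satisfies the defining relation, and by uniqueness of the conditional expectation (up to norm-zero terms) established earlier, $\mathbb{E}_F[X\mid\mathfrak{M}]=Z$.

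The main technical point to be careful about — rather than a deep obstacle — is the manipulation of the conditional expectation as a \emph{module map} over $\mathfrak{M}$: I am repeatedly using that for $m\in\mathfrak{M}$ and $A\in\mathfrak{M}^{\prime}$ one has $\mathbb{E}[mA\mid\mathfrak{M}]=m\,\mathbb{E}[A\mid\mathfrak{M}]$, which follows from the defining property together with commutativity of $\mathfrak{M}$, and that products such as $ZYF^{\dag}F$ stay inside $\mathfrak{M}^{\prime}$ so that the conditional expectation is actually applicable (this uses $F\in\mathfrak{M}^{\prime}$ and $Z,Y\in\mathfrak{M}\subset\mathfrak{M}^{\prime}$, the inclusion holding because $\mathfrak{M}$ is commutative). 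The only other subtlety worth a remark is the invertibility of the denominator discussed above; everything else is a direct unwinding of definitions.
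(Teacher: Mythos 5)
Your proof is correct and is essentially the paper's argument run in reverse: the paper tests $\mathbb{E}[F^{\dag}XF\mid\mathfrak{M}]$ against an arbitrary $Y\in\mathfrak{M}$ and peels off the factor $\mathbb{E}[F^{\dag}F\mid\mathfrak{M}]$, whereas you verify directly that the ratio satisfies the defining relation for $\mathbb{E}_{F}[\,\cdot\mid\mathfrak{M}]$, using the same three ingredients (the defining property of the conditional expectation, $[F,Y]=0$, and the $\mathfrak{M}$-module property). Your extra remark on the invertibility of the denominator addresses a point the paper leaves implicit, but the substance of the argument is the same.
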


\begin{proof}
For all $Y\in \mathfrak{M}$, 
\begin{eqnarray*}
\mathbb{E}\left[ \mathbb{E}\left[ F^{\dag }XF\mid \mathfrak{M}\right] Y\right]
&=&\mathbb{E}\left[ F^{\dag }XFY\right] \\
&=&\mathbb{E}_{F}[XY],\qquad \mathrm{since}\left[ F,Y\right] =0, \\
&=&\mathbb{E}_{F}\left[ \mathbb{E}_{F}\left[ X\mid \mathfrak{M}\right] Y\right]
\\
&=&\mathbb{E}\left[ F^{\dag }F\mathbb{E}_{F}\left[ X\mid \mathfrak{M}\right] Y%
\right] ,\;\mathrm{since\;}F\in \mathfrak{M}^{\prime } \\
&=&\mathbb{E}\left[ \mathbb{E}\left[ F^{\dag }F\mid \mathfrak{M}\right] \mathbb{E%
}_{F}\left[ X\mid \mathfrak{M}\right] Y\right] .
\end{eqnarray*}
\end{proof}

Note that the proof only works if $F\in \mathfrak{M}^{\prime }$!

\subsection{Quantum Stochastic Processes}

We begin by reviewing the theory of quantum stochastic calculus developed by Hudson and Parthasarathy \cite{HP} which gives the mathematical framework with which to generalize the notions of the classical It\={o} integration theory.

\bigskip

We take $\mathbb{R}_{+}=[0,\infty )$. We shall denote by $L_{\mathrm{symm} }^{2}(\mathbb{R}_{+}^{n})$ the space of all square-integrable functions of $n $ positive variables that are completely symmetric: that is, invariant under interchange of any pair of its arguments. The Bose Fock space over $L^{2}(\mathbb{R}_{+})$ is then the infinite direct sum Hilbert space 
\begin{equation*}
\mathfrak{F}:= \bigoplus_{n=0}^{\infty }L_{\mathrm{symm}}^{2}(\mathbb{R}_{+}^{n})
\end{equation*}
with the $n=0$ space identified with $\mathbb{C}$. An element of $\mathfrak{F}$
is then a sequence $\Psi =\left( \psi _{n}\right) _{n=0}^{\infty }$ with $%
\psi _{n}\in L_{\mathrm{symm}}^{2}(\mathbb{R}_{+}^{n})$ and $\left\| \Psi
\right\| ^{2}=\sum_{n=0}^{\infty }\int_{[0,\infty )^{n}}|\psi _{n}\left(
t_{1},\cdots ,t_{n}\right) |^{2}$ $dt_{1}\cdots dt_{n}<\infty $. Moreover,
the Fock space has inner product 
\begin{equation*}
\langle \Psi \mid \Phi \rangle =\sum_{n=0}^{\infty }\int_{[0,\infty
)^{n}}\psi _{n}\left( t_{1},\cdots ,t_{n}\right) ^{\ast }\phi \left(
t_{1},\cdots ,t_{n}\right) .
\end{equation*}
Th physical interpretation is that $\Psi =\left( \psi _{n}\right)
_{n=0}^{\infty }\in \mathfrak{F}$ describes the state of a quantum field consisting of an indefinite number of indistinguishable (Boson) particles on the half-line $\mathbb{R}_{+}$. A simple example is the \textit{vacuum vector} defined by  
\begin{equation*}
\Omega := \left( 1,0,0,\cdots \right)
\end{equation*}
clearly corresponding to no particles. (Note that the no-particle state is a genuine physical state of the field and is not just the zero vector of $\mathfrak{F}$!) An important class of vectors are the \textit{coherent states} $\Psi \left( \beta \right) $ defined by 
\begin{equation*}
\left[ \Psi \left( \beta \right) \right] _{n}\left( t_{1},\cdots
,t_{n}\right) := e^{-\left\| \beta \right\| ^{2}}\frac{1}{\sqrt{n!}}\beta
\left( t_{1}\right) \cdots \beta (t_{n}),
\end{equation*}
for $\beta \in L^{2}(\mathbb{R}_{+})$. (The $n=0$ component understood as $ e^{-\left\| \beta \right\| ^{2}}$.) The vacuum then corresponds to $\Psi
\left( 0\right) $.

\bigskip

For each $t>0$ we define the operators of annihilation $B\left( t\right) $, creation $B^{\ast }\left( t\right) $ and gauge $\Lambda \left( t\right) $ by 
\begin{eqnarray*}
\left[ B\left( t\right) \Psi \right] _{n}\left( t_{1},\cdots ,t_{n}\right)
&:= &\sqrt{n+1}\int_{0}^{t}\psi _{n+1}\left( s,t_{1},\cdots ,t_{n}\right) ds,
\\
\left[ B^{\ast }\left( t\right) \Psi \right] _{n}\left( t_{1},\cdots
,t_{n}\right) &:= &\frac{1}{\sqrt{n}}\sum_{j=1}^{n}1_{[0,t]}(t_{j})\psi
_{n-1}\left( t_{1},\cdots ,\widehat{t_{j}},\cdots ,t_{n}\right) , \\
\left[ \Lambda \left( t\right) \Psi \right] _{n}\left( t_{1},\cdots
,t_{n}\right) &:= &\sum_{j=1}^{n}1_{[0,t]}(t_{j})\psi _{n}\left(
t_{1},\cdots ,t_{n}\right) .
\end{eqnarray*}
The creation and annihilation process are adjoint to each other and the gauge is self-adjoint. We may define a field quadrature by 
\begin{equation*}
Q\left( t\right) =B\left( t\right) +B^{\ast }\left( t\right)
\end{equation*}
and this yields a quantum stochastic process which is essentially classical in the sense that it is self-adjoint and self-commuting, that is $\left[ Q\left( t\right) ,Q\left( s\right) \right] =0$ for all $t,s\geq 0$. We remark that for each $\theta \in \lbrack 0,2\pi )$ we may define quadratures $Q_{\theta }\left( t\right) =e^{-i\theta }B\left( t\right) +e^{i\theta
}B^{\ast }\left( t\right) $ which again yield essentially classical processes, however different quadratures will not commute! For the choice of the vacuum state, $\left\{ Q\left( t\right) :t\geq 0\right\} $ then yields a representation of the Wiener process: for real $k\left( \cdot \right) $
\begin{equation*}
\langle \Omega \mid e^{i\int_{0}^{\infty }k\left( t\right) dQ\left( t\right)
}\Omega \rangle =e^{-\frac{1}{2}\int_{0}^{\infty }k\left( t\right) ^{2}dt}.
\end{equation*}

We also note that $\left\{ \Lambda \left( t\right) :t\geq 0\right\} $ is also an essentially classical process and for the choice of a coherent state
yields a non-homogeneous Poisson process: for real $k\left( \cdot \right) $ 
\begin{equation*}
\langle \Psi \left( \beta \right) \mid e^{i\int_{0}^{\infty }k\left(
t\right) d\Lambda \left( t\right) }\Psi \left( \beta \right) \rangle =\exp
\int_{0}^{\infty }|\beta \left( t\right) |^{2}\left( e^{ik\left( t\right)
}-1\right) dt.
\end{equation*}

We consider a quantum mechanical system with Hilbert space $\mathfrak{h}$ being driven by an external quantum field input. the quantum field will be modeled as an idealized Bose field with Hilbert space $\Gamma \left( L^{2}\left(  \mathbb{R}_{+},dt\right) \right) $ which is the Fock space over the one-particle space $L^{2}\left( \mathbb{R}_{+},dt\right) $. Elements of the Fock space may be thought of as vectors $\Psi =\oplus _{n=0}^{\infty }\psi _{n}$ where $\psi _{n}=\psi _{n}\left( t_{1},\cdots ,t_{n}\right) $ is a completely symmetric functions with 
\begin{equation*}
\sum_{n=0}^{\infty }\int_{[0,\infty )^{n}}|\psi _{n}\left( t_{1},\cdots
,t_{n}\right) |^{2}dt_{1}\cdots dt_{n}<\infty .
\end{equation*}

The Hudson-Parthasarathy theory of quantum stochastic calculus gives a generalization of the It\={o} theory of integration to construct integral processes with respect to the processes of annihilation, creation, gauge and, of course, time. This leads to the quantum It\={o} table \ref{QIT}.

\begin{table}[tbp]
\caption{Quantum It\={o} Table}
\label{QIT}
\begin{center}
{\footnotesize 
\begin{tabular}{l|llll}
$\times $ & $dB$ & $d\Lambda $ & $dB^{\ast }$ & $dt$ \\ \hline
$dB$ & 0 & $dB$ & $dt$ & 0 \\ 
$d\Lambda $ & 0 & $d\Lambda $ & $dB^{\ast }$ & 0 \\ 
$dB^{\ast }$ & 0 & 0 & 0 & 0 \\ 
$dt$ & 0 & 0 & 0 & 0
\end{tabular}
. }
\end{center}
\end{table}

We remark that the Fock space carries a natural filtration in time obtained from the decomposition $\mathfrak{F}\cong \mathfrak{F}_{t]}\otimes \mathfrak{F}_{(t}$ into past and future subspaces: these are the Fock spaces over $L^{2}\left[ 0,t\right] $ and $L^{2}(t,\infty )$ respectively.

\subsection{Continuous-Time Quantum Stochastic Evolutions}

On the joint space $\mathfrak{h}\otimes \mathfrak{F}$, we consider the quantum stochastic process $V(\cdot )$ satisfying the QSDE 
\begin{equation*}
dV(t)=\left\{ 
\left( S-I\right) \otimes d\Lambda (t)+L\otimes dB^{\ast }\left( t\right) 
-L^{\ast }S\otimes dB\left( t\right) -(\frac{1}{2}L^{\ast }L+iH)\otimes dt
\right\} V(t),
\end{equation*}with $V\left( 0\right) =1$, and where $S$ is unitary, $L$ is bounded and $H$ self-adjoint. This specific form of QSDE may be termed the \emph{Hudson-Parthasarathy equation} as the algebraic conditions on the coefficients are necessary and sufficient to ensure unitarity (though the restriction for $L$ to be bounded can be lifted). The process is also adapted in the sense that for each $t>0$, $V\left( t\right) $ acts non-trivially on the component $\mathfrak{h}\otimes \mathfrak{F}_{t]}$ and trivially on $\mathfrak{F}_{(t}$.

\subsubsection{The Heisenberg-Langevin Equations}

For a fixed system operator $X$ we set 
\begin{equation}
j_{t}\left( X\right) :=V^{\dag }\left( t\right) \left[ X\otimes I\right]
V\left( t\right) .
\end{equation}
Then from the quantum It\={o} calculus we get 
\begin{equation}
dj_{t}\left( X\right)  =j_{t}\left( \mathcal{L}_{11}X\right) \otimes
d\Lambda \left( t)+j_{t}\left( \mathcal{L}_{10}X\right) \otimes dB^{\ast
}\left( t\right) \right)  
+j_{t}\left( \mathcal{L}_{01}X\right) \otimes dB\left( t\right)
+j_{t}\left( \mathcal{L}_{00}X\right) \otimes dt  \label{dynamical}
\end{equation}
where the Evans-Hudson maps $\mathcal{L}_{\mu \nu }$ are explicitly given by
\begin{eqnarray*}
\mathcal{L}_{11}X &=&S^{\ast }XS-X, \\
\mathcal{L}_{10}X &=&S^{\ast }[X,L], \\
\mathcal{L}_{01}X &=&[L^{\ast },X]S \\
\mathcal{L}_{00}X &=&\mathcal{L}_{\left( L,H\right) }
\end{eqnarray*}
and in particular $\mathcal{L}_{00}$ takes the generic form of a Lindblad
generator: 
\begin{equation}
\mathcal{L}_{\left( L,H\right) }=\frac{1}{2}L^{\ast }[X,L]+\frac{1}{2}[%
L^{\ast },X]L-i\left[ X,H\right] .
\end{equation}

\subsubsection{Output Processes}

We introduce the processes 
\begin{eqnarray}
B^{\mathrm{out}}\left( t\right) &:= & V^{\dag }\left( t\right) \left[
I\otimes B\left( t\right) \right] V\left( t\right) ,  \notag \\
\Lambda ^{\mathrm{out}}\left( t\right) &:= &V^{\dag }\left( t\right) \left[
I\otimes \Lambda \left( t\right) \right] V\left( t\right) .
\end{eqnarray}
We note that we equivalently have $B^{\mathrm{out}}\left( t\right) \equiv
V^{\dag }\left( T\right) \left[ 1\otimes B\left( t\right) \right] V\left(
T\right) $, for $t\leq T$. Again using the quantum It\={o} rules, we see
that 
\begin{eqnarray}
dB_{\mathrm{out}} &=&j_{t}(S)dB(t)+j_{t}(L)dt,  \notag \\
d\Lambda _{\mathrm{out}} &=&d\Lambda \left( t\right) +j_{t}(L^{\ast
}S)dB^{\ast }(t)+j_{t}(S^{\ast }L)dB(t)+j_{t}(L^{\ast }L)dt.  \label{output}
\end{eqnarray}

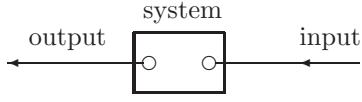
\begin{figure}[tbp]
\centering
\setlength{\unitlength}{.04cm} 
\begin{picture}(120,45)
\label{pic1}
\thicklines
\put(45,10){\line(0,1){20}}
\put(45,10){\line(1,0){30}}
\put(75,10){\line(0,1){20}}
\put(45,30){\line(1,0){30}}
\thinlines
\put(48,20){\vector(-1,0){45}}
\put(120,20){\vector(-1,0){20}}
\put(120,20){\line(-1,0){48}}
\put(50,20){\circle{4}}
\put(70,20){\circle{4}}
\put(100,26){input}
\put(48,35){system}
\put(10,26){output}
\end{picture}
\caption{Input - Output component}
\end{figure}

\subsubsection{The Measurement Algebra}

We wish to consider the problem of continuously measuring a quantum
stochastic process associated with the output field. we shall chose to
measure an observable process of the form 
\begin{equation}
Y^{\mathrm{out}}(t):= V(t)^{\dag }[I\otimes Y^{\mathrm{in}}(t)]V(t)
\end{equation}
which corresponds to a quadrature of the field when 
\begin{equation*}
Y^{\mathrm{in}}(t)=Q\left( t\right) =B^{\ast }(t)+B(t),
\end{equation*}
or counting the number of\ output photons when 
\begin{equation*}
Y^{\mathrm{in}}(t)=\Lambda (t).
\end{equation*}

We introduce von Neumann algebra 
\begin{equation*}
\mathfrak{Y}_{t]}^{\mathrm{in}}=\mathrm{vN}\left\{ Y^{\mathrm{in}}\left(
s\right) :0\leq s\leq t\right\} ,
\end{equation*}
and define the measurement algebra up to time $t$ to be 
\begin{equation}
\mathfrak{Y}_{t]}^{\mathrm{out}}=\mathrm{vN}\left\{ Y^{\mathrm{out}}\left(
s\right) :0\leq s\leq t\right\} \equiv V\left( t\right) ^{\dag }\mathfrak{Y}%
_{t]}^{\mathrm{in}}V\left( t\right) .
\end{equation}

Note that both algebras are commutative:

\begin{equation*}
\left[ Y^{\mathrm{out}}\left( t\right) ,Y^{\mathrm{out}}\left( s\right) %
\right] =V\left( T\right) ^{\dag }\left( I\otimes \left[ Y^{\mathrm{in}%
}\left( t\right) ,Y^{\mathrm{in}}\left( s\right) \right] \right) V\left(
T\right) =0
\end{equation*}
for $T=t\vee s$. The family $\left\{ \mathfrak{Y}_{t]}^{\mathrm{out}}:t\geq
0\right\} $ then forms an increasing family (\textit{filtration)} of von
Neumann algebras.

\subsubsection{The Non-Demolition Property}

The system observables may be estimated from the current measurement algebra 
\begin{equation}
j_{t}\left( X\right) \in \left( \mathfrak{Y}_{t]}^{\mathrm{out}}\right) ^{\prime
}.  \label{non-dem}
\end{equation}
The proof follows from the observation that for $t\geq s$ 
\begin{equation*}
\left[ j_{t}\left( X\right) ,Y^{\mathrm{out}}\left( s\right) \right]
=V\left( t\right) ^{\dag }\left[ X\otimes I,I\otimes Y^{\mathrm{in}}\left(
s\right) \right] V\left( t\right) =0.
\end{equation*}

\subsection{Constructing The Quantum Filter}

The filtered estimate for $j_{t}\left( X\right) $ given the measurements of
the output field is then 
\begin{equation*}
\pi _{t}\left( X\right) := \mathbb{E}\left[ j_{t}\left( X\right) \mid \mathfrak{Y%
}_{t]}^{\mathrm{out}}\right] .
\end{equation*}
Let $\mathbb{\tilde{E}}_{t}\left[ X\right] =\mathbb{E}\left[ j_{t}\left(
X\right) \right] $, then by lemma \ref{unitary rotations} 
\begin{equation*}
\pi _{t}\left( X\right) =\mathbb{E}\left[ j_{t}\left( X\right) \mid \mathfrak{Y}%
_{t]}^{\mathrm{out}}\right] =V\left( t\right) ^{\dag }\mathbb{\tilde{E}}_{t}%
\left[ X|\mathfrak{Y}_{t]}^{\mathrm{in}}\right] V\left( t\right) .
\end{equation*}

\subsubsection{Reference Probability Approach}

Suppose that there is an adapted process $F\left( \cdot \right) $ such that $%
F\left( t\right) \in \left( \mathfrak{Y}_{t]}^{\mathrm{in}}\right) ^{\prime }$
and $\mathbb{\tilde{E}}_{t}\left[ X|\right] =\mathbb{E}\left[ F\left(
t\right) ^{\dag }\left( X\otimes 1\right) F\left( t\right) \right] $ for all
system operators $X$, then by lemma \ref{QBayes}

\begin{eqnarray*}
\pi _{t}\left( X\right) &=&\mathbb{E}\left[ j_{t}\left( X\right) \mid \mathfrak{Y%
}_{t]}^{\mathrm{out}}\right] \\
&=&V\left( t\right) ^{\dag }\mathbb{\tilde{E}}_{t}\left[ X\mid \mathfrak{Y}%
_{t]}^{\mathrm{in}}\right] V\left( t\right) \\
&=&V\left( t\right) ^{\dag }\frac{\mathbb{E}\left[ F\left( t\right) ^{\dag
}\left( X\otimes 1\right) F\left( t\right) \mid \mathfrak{Y}_{t]}^{\mathrm{in}}%
\right] }{\mathbb{E}\left[ F\left( t\right) ^{\dag }F\left( t\right) \mid 
\mathfrak{Y}_{t]}^{\mathrm{in}}\right] }V\left( t\right)
\end{eqnarray*}
This is essentially a non-commutative version of the Girsanov transformation
from stochastic analysis. The essential feature is that the transformation
operators $F\left( t\right) $ giving the change of representation for the
expectation lie in the commutant of the measurement algebra up to time $t$.

\bigskip

We therefore obtain an operator-valued Kallianpur-Striebel relation 
\begin{equation}
\pi _{t}\left( X\right) =\frac{\sigma _{t}\left( X\right) }{\sigma
_{t}\left( 1\right) },  \label{QKaSt}
\end{equation}
which may be called the \emph{quantum Kallianpur-Striebel}, where 
\begin{equation}
\sigma _{t}\left( X\right) := V\left( t\right) ^{\dag }\mathbb{E}\left[
F\left( t\right) ^{\dag }\left( X\otimes 1\right) F\left( t\right) \mid 
\mathfrak{Y}_{t]}^{\mathrm{in}}\right] V\left( t\right) .  \label{sigma}
\end{equation}

\section{Coherent State Filters}

We shall consider the class of states 
\begin{equation}
\mathbb{E}^{\beta }\left[ \,\cdot \,\right] =\langle \psi ^{\beta }|\cdot
\psi ^{\beta }\rangle 
\end{equation}
of the form 
\begin{equation}
\psi ^{\beta }=\phi \otimes \Psi \left( \beta \right) 
\end{equation}
where $\phi $ is a normalized vector in the system Hilbert space and $\Psi
\left( \beta \right) $ is the coherent state with test function $\beta \in
L^{2}[0,\infty )$. We note that 
\begin{eqnarray*}
dB\left( t\right) \;\Psi \left( \beta \right)  &=&\beta \left( t\right)
dt\;\Psi \left( \beta \right) , \\
d\Lambda \left( t\right) \;\Psi \left( \beta \right)  &=&\beta \left(
t\right) dB^{\ast }(t)\;\Psi \left( \beta \right) .
\end{eqnarray*}

We see that 
\begin{equation}
\mathbb{E}^{\beta }[dj_{t}\left( X\right) ]=\mathbb{E}^{\beta }[j_{t}\left( 
\mathcal{L}^{\beta \left( t\right) }X\right) ]dt  \label{averaged}
\end{equation}
where 
\begin{equation}
\mathcal{L}^{\beta \left( t\right) }X=\mathcal{L}_{00}X+\beta \left(
t\right) ^{\ast }\mathcal{L}_{10}X+\beta \left( t\right) \mathcal{L}%
_{01}X+|\beta \left( t\right) |^{2}\mathcal{L}_{11}X
\end{equation}
The generator is again of Lindblad form and in particular we have 
$$\mathcal{L%
}^{\beta \left( t\right) }\equiv \mathcal{L}_{\left( L^{\beta \left( t\right) },H^{\beta \left( t\right) } \right) }$$ 
with 
\begin{equation}
L^{\beta \left( t\right) }=S\beta \left( t\right) +L,
\quad
H^{\beta \left( t\right) }= \frac{1}{2i} \left( L^\dag \beta (t) -L \beta (t)^\ast \right)   .  \label{L_t}
\end{equation}
We may define a parametrized family of density matrices on the system by
setting tr$_{\mathfrak{h}}\left\{ \varrho _{t}X\right\} =\mathbb{E}^{\beta
}[j_{t}\left( X\right) ]$, in which case we deduce the master equation 
\begin{equation*}
\dot{\varrho}_{t}=\mathcal{L}^{\beta \left( t\right) \prime }\left( \varrho
\right) ,
\end{equation*}
where the adjoint is defined through the duality tr$_{\mathfrak{h}%
}\left\{ \varrho \,\mathcal{L}X\right\} =$tr$_{\mathfrak{h}}\left\{ \mathcal{L}%
^{\prime }\varrho \,X\right\} $.

\bigskip

From the input-output relation for the field 
\begin{equation*}
dB_{\mathrm{out}}=j_{t}\left( S\right) dB\left( t\right) +j_{t}\left(
L\right) dt
\end{equation*}
we obtain the average 
\begin{equation*}
\mathbb{E}^{\beta }\left[ dB_{\mathrm{out}}\right]  =\left\{ j_{t}\left(
S\right) \beta \left( t\right) +j_{t}\left( L\right) \right\} dt
=j_{t}\left( L^{\beta \left( t\right) }\right) dt.
\end{equation*}

\subsection{Quadrature Measurement}

We take $Y^{\mathrm{in}}\left( t\right) =B\left( t\right) +B^{\ast }\left(
t\right) $ which is a quadrature of the input field. Setting $\psi \left(
t\right) =V\left( t\right) \psi $ we have that 
\begin{equation}
d\psi \left( t\right) =\left[\left( S-1\right) \beta \left( t\right)
+L\right] dB^{\ast }\left( t\right) \psi \left( t\right) - \left[ L^{\ast
}S\beta \left( t\right) +\frac{1}{2}L^{\ast }L+iH \right] dt\,\psi \left(
t\right) 
\label{apply Holevo trick}
\end{equation}
At this stage we apply a trick which is essentially a \emph{quantum Girsanov transformation}. This trick is due to Belavkin \cite{Bel89} and Holevo \cite{Hol90}.
We now add a term proportional to $dB\left( t\right) \psi \left( t\right) $
to get

\begin{eqnarray*}
d\psi \left( t\right)  &=&\left( \left( S-1\right) \beta \left( t\right)
+L\right) \left[ dB^{\ast }\left( t\right) +dB\left( t\right) \right] \psi
\left( t\right)  \\
&&-\left( L^{\ast }S\beta \left( t\right) +\frac{1}{2}L^{\ast }L+iH+\left(
\left( S-1\right) \beta \left( t\right) +L\right) \beta \left( t\right)
\right) dt\,\psi \left( t\right)  \\
&\equiv &\tilde{L}_{t}dY^{\mathrm{in}}\left( t\right) \psi \left( t\right) +%
\tilde{K}_{t}dt\,\psi \left( t\right) ,
\end{eqnarray*}
where 
\begin{eqnarray*}
\tilde{L}_{t} &=&L+\left( S-I\right) \beta \left( t\right) =L^{\beta \left(
t\right) }-\beta \left( t\right) , \\
\tilde{K}_{t} &=&-L^{\ast }S\beta \left( t\right) -\frac{1}{2}L^{\ast
}L-iH-L^{\beta \left( t\right) }\beta \left( t\right) +\beta \left( t\right)
^{2}.
\end{eqnarray*}
It follows that $\psi \left( t\right) \equiv F\left( t\right) \psi $ where $%
F\left( t\right) $ is the adapted process satisfying the QSDE 
\begin{equation*}
dF\left( t\right) =\tilde{L}_{t}dY^{\mathrm{in}}\left( t\right) F\left(
t\right) +\tilde{K}_{t}dt\,F\left( t\right) ,\;F\left( 0\right) =I.
\end{equation*}
Moreover $F\left( t\right) $ is in the commutant of $\mathfrak{Y}_{t]}^{\mathrm{%
in}}$ and therefore allows us to perform the non-commutative Girsanov trick.

\bigskip

From the quantum It\={o} product rule we then see that 
\begin{equation*}
d\left[ F^{\ast }\left( t\right) XF\left( t\right) \right]  = F^{\ast
}\left( t\right) \left( X\tilde{L}_{t}+\tilde{L}_{t}^{\ast }X\right) F\left(
t\right) dY^{\mathrm{in}}\left( t\right)  \\
+F^{\ast }\left( t\right) \left( \tilde{L}_{t}^{\ast }X\tilde{L}_{t}+X%
\tilde{K}_{t}+\tilde{K}_{t}X\right) F\left( t\right) dt
\end{equation*}
and this leads to the SDE for the un-normalized filter 
\begin{equation*}
d\sigma _{t}\left( X\right)  = \sigma _{t}\left( X\tilde{L}_{t}+\tilde{L}
_{t}^{\ast }X\right) dY^{\mathrm{out}}\left( t\right)  \\
+\sigma _{t}\left( \tilde{L}_{t}^{\ast }X\tilde{L}_{t}+X\tilde{K}_{t}+
\tilde{K}_{t}X\right) dt.
\end{equation*}
After a small bit of algebra, this may be written in the form 
\begin{equation*}
d\sigma _{t}\left( X\right) =\sigma _{t}\left( X\tilde{L}_{t}+\tilde{L}%
_{t}^{\ast }X\right) \left[ dY^{\mathrm{out}}\left( t\right) -\left( \beta
\left( t\right) +\beta \left( t\right) ^{\ast }\right) dt\right] +\sigma
_{t}\left( \mathcal{L}^{\beta \left( t\right) }X\right) dt.
\end{equation*}
This is the quantum Zakai equation for the filter based on continuous
measurement of the output field quadrature.

\bigskip

To obtain the quantum Kushner-Stratonovich equation we first observe that
the normalization satisfies the SDE 
\begin{equation*}
d\sigma _{t}\left( I\right) =\sigma _{t}\left( \tilde{L}_{t}+\tilde{L}%
_{t}^{\ast }\right) \left[ dY^{\mathrm{out}}\left( t\right) -\left( \beta
\left( t\right) +\beta \left( t\right) ^{\ast }\right) dt\right] 
\end{equation*}
and by It\={o}'s formula 
\begin{equation*}
d\frac{1}{\sigma _{t}\left( I\right) }=-\frac{\sigma _{t}\left( \tilde{L}%
_{t}+\tilde{L}_{t}^{\ast }\right) }{\sigma _{t}\left( I\right) ^{2}}\left[
dY^{\mathrm{out}}\left( t\right) -\left( \beta \left( t\right) +\beta \left(
t\right) ^{\ast }\right) dt\right] +\frac{\sigma _{t}\left( \tilde{L}_{t}+%
\tilde{L}_{t}^{\ast }\right) ^{2}}{\sigma _{t}\left( I\right) ^{3}}dt.
\end{equation*}
The product rule then allows us to determine the SDE for $\pi _{t}\left(
X\right) =\dfrac{\sigma _{t}\left( X\right) }{\sigma _{t}\left( I\right) }$: 
\begin{equation*}
d\pi _{t}\left( X\right) =\pi _{t}\left( \mathcal{L}^{\beta \left( t\right)
}X\right) dt +\left\{ \pi _{t}\left( X\tilde{L}_{t}+\tilde{L}_{t}^{\ast
}X\right) -\pi _{t}\left( X\right) \pi _{t}\left( \tilde{L}_{t}+\tilde{L}%
_{t}^{\ast }\right) \right\} dI\left( t\right) ,
\end{equation*}
where the innovations process satisfies 
\begin{equation*}
dI\left( t\right) =dY^{\mathrm{out}}\left( t\right) -\left[ \pi _{t}\left( 
\tilde{L}_{t}+\tilde{L}_{t}^{\ast }\right) +\beta \left( t\right) +\beta
\left( t\right) ^{\ast }\right] dt.
\end{equation*}
We note that the innovations is martingale with respect to filtration
generated by the output process for the choice of probability measure
determined by the coherent state.

\subsubsection{The Quadrature Measurement Filter for a Coherent state}

In it convenient to write the filtering equations in terms of the operators $%
L^{\beta \left( t\right) }$. The result is the \emph{Belavkin-Kushner-Stratonovich equation} for the filtered estimate based on  optimal estimation of continuous non-demolition field-quadrature measurements in a coherent state $\beta (\cdot )$.
\begin{equation}
d\pi _{t}\left( X\right) = \pi _{t}\left( \mathcal{L}^{\beta \left( t\right)
}X\right) dt
 +\left\{ \pi _{t}\left( XL^{\beta \left( t\right) }+L^{\beta
\left( t\right) \ast }X\right)-\pi _{t}\left( X\right) \pi _{t}\left(
L^{\beta \left( t\right) }+L^{\beta \left( t\right) \ast }\right) \right\}
dI^{\mathrm{quad}}\left( t\right) ,  \notag
\\
\label{quadfilter}
\end{equation}
with the innovations 
\begin{equation}
dI^{\mathrm{quad}}\left( t\right) =dY^{\mathrm{out}}\left( t\right) -\pi
_{t}\left( L^{\beta \left( t\right) }+L^{\beta \left( t\right) \ast }\right)
dt.
\end{equation}

\subsection{Photon Counting Measurement}

For convenience we shall derive the filter based on measuring the number of
output photons under the assumption that the function $\beta $ is bounded
away from zero. We discuss how this restriction can be removed later. We now
set $Y^{\mathrm{in}}\left( t\right) =\Lambda \left( t\right) $. We again
seek to construct an adapted process $F\left( t\right) $ in the commutant of 
$\mathfrak{Y}_{t]}^{\mathrm{in}}$ such that $\psi \left( t\right) =F\left(
t\right) \psi $. We start with \ref{apply Holevo trick} again but now note that 
\begin{equation*}
dB^{\ast }\left( t\right) \psi \left( t\right) =\frac{1}{\beta \left(
t\right) }d\Lambda (t)\psi \left( t\right)
\end{equation*}
and making this substitution gives 
\begin{equation*}
d\psi \left( t\right) =\frac{1}{\beta \left( t\right) }\tilde{L}_{t}dY^{%
\mathrm{in}}(t)\psi \left( t\right) -\left( \frac{1}{2}L^{\ast }L+iH+L^{\ast
}S\beta \left( t\right) \right) dt\,\psi \left( t\right) .
\end{equation*}
We are then lead to the Zakai equation 
\begin{multline*}
d\sigma _{t}\left( X\right) =\frac{1}{|\beta \left( t\right) |^{2}}\sigma
_{t}\left( \tilde{L}_{t}^{\ast }X\tilde{L}_{t}+\beta (t)^{\ast }X\tilde{L}%
_{t}+\tilde{L}_{t}^{\ast }X\beta \left( t\right) \right) dY^{\mathrm{out}%
}\left( t\right) \\
-\sigma _{t}\left( \frac{1}{2}XL^{\ast }L+\frac{1}{2}L^{\ast }LX+i\left[ X,H%
\right] -XL^{\ast }S\beta \left( t\right) -\beta \left( t\right) ^{\ast
}S^{\ast }LX\right) dt
\end{multline*}
which may be rearranged as 
\begin{equation*}
d\sigma _{t}\left( X\right) =\sigma _{t}\left( \mathcal{L}^{\beta }X\right)
dt 
+\frac{1}{|\beta \left( t\right) |^{2}}\sigma _{t}\left( \tilde{L}_{t}^{\ast
}X\tilde{L}_{t}+\beta (t)^{\ast }X\tilde{L}_{t}+\tilde{L}_{t}^{\ast }X\beta
\left( t\right) \right) \left[ dY^{\mathrm{out}}\left( t\right) -|\beta
\left( t\right) |^{2}dt\right] .
\end{equation*}

To determine the normalized filter, we note that 
\begin{equation*}
d\sigma _{t}\left( I\right) =\frac{1}{|\beta \left( t\right) |^{2}}\sigma
_{t}\left( \tilde{L}_{t}^{\ast }\tilde{L}_{t}+\beta (t)^{\ast }\tilde{L}_{t}+%
\tilde{L}_{t}^{\ast }\beta \left( t\right) \right) \left[ dY^{\mathrm{out}%
}\left( t\right) -|\beta \left( t\right) |^{2}dt\right]
\end{equation*}
and that 
\begin{eqnarray*}
d\frac{1}{\sigma _{t}\left( I\right) } &=&\frac{\sigma _{t}\left( \tilde{L}%
_{t}^{\ast }\tilde{L}_{t}+\beta (t)^{\ast }\tilde{L}_{t}+\tilde{L}_{t}^{\ast
}\beta \left( t\right) \right) }{\sigma _{t}\left( I\right) ^{2}}dt \\
&&-\frac{\sigma _{t}\left( \tilde{L}_{t}^{\ast }\tilde{L}_{t}+\beta
(t)^{\ast }\tilde{L}_{t}+\tilde{L}_{t}^{\ast }\beta \left( t\right) \right) 
}{\sigma _{t}\left( I\right) \left[ \sigma _{t}\left( \tilde{L}_{t}^{\ast }%
\tilde{L}_{t}+\beta (t)^{\ast }\tilde{L}_{t}+\tilde{L}_{t}^{\ast }\beta
\left( t\right) \right) +|\beta \left( t\right) |^{2}\right] }dY^{\mathrm{out%
}}\left( t\right) .
\end{eqnarray*}
An application of the It\={o} product formula then yields the quantum analogue of the Kushner-Stratonovich
equation for the normalized filter; 
\begin{multline*}
d\pi _{t}\left( X\right) =\pi _{t}\left( \mathcal{L}^{\beta }X\right) dt+%
\frac{1}{\pi _{t}\left( \tilde{L}_{t}^{\ast }\tilde{L}_{t}+\beta (t)^{\ast }%
\tilde{L}_{t}+\tilde{L}_{t}^{\ast }\beta \left( t\right) \right) +|\beta
\left( t\right) |^{2}}dI\left( t\right) \\
\times \left\{ \pi _{t}(\tilde{L}_{t}^{\ast }X\tilde{L}_{t}+\beta (t)^{\ast
}X\tilde{L}_{t}+\tilde{L}_{t}^{\ast }X\beta \left( t\right) )-\pi _{t}\left(
X\right) \pi _{t}(\tilde{L}_{t}^{\ast }\tilde{L}_{t}+\beta (t)^{\ast }\tilde{%
L}_{t}+\tilde{L}_{t}^{\ast }\beta \left( t\right) )\right\}
\end{multline*}
and the innovations process is now 
\begin{equation*}
dI\left( t\right) =dY^{\mathrm{out}}\left( t\right) -\left[ \pi _{t}\left( 
\tilde{L}_{t}^{\ast }\tilde{L}_{t}+\beta (t)^{\ast }\tilde{L}_{t}+\tilde{L}%
_{t}^{\ast }\beta \left( t\right) \right) +|\beta \left( t\right) |^{2}%
\right] dt.
\end{equation*}
We note that the innovations is again a martingale with respect to
filtration generated by the output process for the choice of probability
measure determined by the coherent state.

\indent
The derivation above relied on the assumption that $\beta \left( t\right)
\neq 0$, however this is not actually essential. In the case of a vacuum
input, it is possible to apply an additional rotation $W\left( t\right) $
satisfying $dW\left( t\right) =[z^{\ast }dB\left( t\right) -zdB^{\ast
}\left( t\right) -\frac{1}{2}|z|^{2}dt]W\left( t\right)$ , with $W\left( 0\right) =I$,
 and apply the reference probability technique to the von Neumann algebra
generated by $N\left( t\right) =W\left( t\right) \Lambda \left( t\right)
W\left( t\right) ^{\ast }$: this leads to a Zakai equation that explicitly
depends on the choice of $z\in C$ however the Kushner-Stratonovich equation
for the normalized filter will be $z$-independent. Similarly for the general
coherent state considered here, we could take $z$ to be a function of $t$ in
which case we must chose $z$ so that $\beta \left( t\right) +z\left(
t\right) \neq 0$. The Kushner-Stratonovich equation obtained will then be
identical to what we have just derived.

\subsubsection{Photon Counting Measurement in a Coherent State}

Again it is convenient to re-express the filter in term of $L^{\beta \left(
t\right) }$. We now obtain the Belavkin-Kushner-Stratonovich equation for the filtered estimate based on  optimal estimation of continuous non-demolition field-quanta number measurements in a coherent state $\beta (\cdot )$.
\begin{equation}
d\pi _{t}\left( X\right) =\pi _{t}\left( \mathcal{L}^{\beta \left( t\right)
}X\right) dt+\left\{ \frac{\pi _{t}\left( L^{\beta \left( t\right) \ast
}XL^{\beta \left( t\right) }\right) }{\pi _{t}\left( L^{\beta \left(
t\right) \ast }L^{\beta \left( t\right) }\right) }-\pi _{t}\left( X\right)
\right\} dI^{\mathrm{num}}\left( t\right) ,  \label{numberfilter}
\end{equation}
with the innovations 
\begin{equation}
dI^{\mathrm{num}}\left( t\right) =dY^{\mathrm{out}}\left( t\right) -\pi
_{t}\left( L^{\beta \left( t\right) \ast }L^{\beta \left( t\right) }\right)
dt.
\end{equation}

\section{Characteristic Function Approach}

As an alternative to the reference probability approach, we apply a method
based on introducing a process $C\left( t\right) $ satisfying the QSDE
\begin{equation}
dC\left( t\right) =f\left( t\right) C\left( t\right) dY\left( t\right) ,
\end{equation}
with initial condition $C\left( 0\right) =I$. Here we assume that $f$ is
integrable, but otherwise arbitrary. The technique is to make an ansatz of
the form
\begin{equation}
d\pi _{t}\left( X\right) =\mathcal{F}_{t}\left( X\right) dt+\mathcal{H}%
_{t}\left( X\right) dY\left( t\right)   \label{filter ansatz}
\end{equation}
where we assume that the processes $\mathcal{F}_{t}\left( X\right) $ and $%
\mathcal{H}_{t}\left( X\right) $ are adapted and lie in $\mathfrak{Y}_{t]}$.
These coefficients may be deduced from the identity
\begin{equation*}
\mathbb{E}\left[ \left( \pi _{t}\left( X\right) -j_{t}\left( X\right)
\right) C\left( t\right) \right] =0
\end{equation*}
which is valid since $C\left( t\right) \in \mathfrak{Y}_{t]}$. We note that the
It\={o} product rule implies $I+II+III=0$ where

\begin{eqnarray*}
I &=&\mathbb{E}\left[ \left( d\pi _{t}\left( X\right) -dj_{t}\left( X\right)
\right) C\left( t\right) \right] , \\
II &=&\mathbb{E}\left[ \left( \pi _{t}\left( X\right) -j_{t}\left( X\right)
\right) dC\left( t\right) \right] , \\
III &=&\mathbb{E}\left[ \left( d\pi _{t}\left( X\right) -dj_{t}\left(
X\right) \right) dC\left( t\right) \right] .
\end{eqnarray*}

We illustrate how this works in the case of quadrature and photon counting
in a coherent state. For convenience of notation we shall write $S_{t}$ for $%
j_{t}\left( S\right) $, etc.

\subsection{Quadrature Measurement}

Here we have
\begin{equation*}
dY\left( t\right) =S_{t}dB\left( t\right) +S_{t}^{\ast }dB\left( t\right)
^{\ast }+\left( L_{t}+L_{t}^{\ast }\right) dt
\end{equation*}
so that
\begin{eqnarray*}
I &=&\mathbb{E}_{\beta }\left[ \mathcal{F}_{t}\left( X\right) C\left(
t\right) +\mathcal{H}_{t}\left( X\right) \left( S_{t}\beta _{t}+S_{t}^{\ast
}\beta _{t}^{\ast }+L_{t}+L_{t}^{\ast }\right) C\left( t\right) \right] dt \\
&&-\mathbb{E}_{\beta }\left[ \left\{ \left( \mathcal{L}_{00}X\right)
_{t}+\left( \mathcal{L}_{01}X\right) _{t}\beta _{t}+\left( \mathcal{L}%
_{10}X\right) _{t}\beta _{t}^{\ast }+\left( \mathcal{L}_{11}X\right)
_{t}\left| \beta _{t}\right| ^{2}\right\} C\left( t\right) \right] dt, \\
II &=&\mathbb{E}_{\beta }\left[ \left( \pi _{t}\left( X\right) -X_{t}\right)
f\left( t\right) C\left( t\right) \left( S_{t}\beta _{t}+S_{t}^{\ast }\beta
_{t}^{\ast }+L_{t}+L_{t}^{\ast }\right) \right] dt, \\
III &=&\mathbb{E}_{\beta }\left[ \left\{ \mathcal{H}_{t}\left( X\right)
-\left( \mathcal{L}_{01}X\right) _{t}S_{t}^{\ast }\beta _{t}^{\ast }-\left( 
\mathcal{L}_{11}X\right) _{t}S_{t}^{\ast }\beta _{t}^{\ast }\right\} f\left(
t\right) C\left( t\right) \right] dt.
\end{eqnarray*}
Now from the identity $I+II+III=0$ we may extract separately the
coefficients of $f\left( t\right) C\left( t\right) $ and $C\left( t\right) $
as $f\left( t\right) $ was arbitrary to deduce
\begin{eqnarray*}
\pi _{t}\left( \left( \pi _{t}\left( X\right) -X_{t}\right) \left(
S_{t}\beta _{t}+S_{t}^{\ast }\beta _{t}^{\ast }+L_{t}+L_{t}^{\ast }\right) \right)  
+ \pi_t \left( \mathcal{H}_{t}\left( X\right) -\left( \mathcal{L}_{01}X\right)
_{t}S_{t}^{\ast }\beta _{t}^{\ast }-\left( \mathcal{L}_{11}X\right)
_{t}S_{t}^{\ast }\beta _{t}^{\ast }\right)  &=&0, \\
\pi _{t}\left( \mathcal{F}_{t}\left( X\right) +\mathcal{H}_{t}\left(
X\right) \left( S_{t}\beta _{t}+S_{t}^{\ast }\beta _{t}^{\ast
}+L_{t}+L_{t}^{\ast }\right) -\left( \mathcal{L}^{\beta \left( t\right)
}X\right) _{t}\right)  &=&0.
\end{eqnarray*}
Using the projective property of the conditional expectation $\left( \pi
_{t}\circ \pi _{t}=\pi _{t}\right) $ and the assumption that $\mathcal{F}%
_{t}\left( X\right) $ and $\mathcal{H}_{t}\left( X\right) $ lie in $\mathfrak{Y}%
_{t]}$, we find after a little algebra that
\begin{eqnarray*}
\mathcal{H}_{t}\left( X\right)  &=&\pi _{t}\left( XL^{\beta \left( t\right)
}+L^{\beta \left( t\right) \ast }X\right) -\pi _{t}\left( X\right) \pi
_{t}\left( L^{\beta \left( t\right) }+L^{\beta \left( t\right) \ast }\right)
, \\
\mathcal{F}_{t}\left( X\right)  &=&\pi _{t}\left( \mathcal{L}^{\beta \left(
t\right) }X\right) -\mathcal{H}_{t}\left( X\right) \pi _{t}\left( L^{\beta
\left( t\right) }+L^{\beta \left( t\right) \ast }\right) ,
\end{eqnarray*}
so that the equation (\ref{filter ansatz}) reads as
\begin{equation*}
d\pi _{t}\left( X\right) =\pi _{t}\left( \mathcal{L}^{\beta \left( t\right)
}X\right) dt+\mathcal{H}_{t}\left( X\right) \left[ dY\left( t\right) -\pi
_{t}\left( L^{\beta \left( t\right) }+L^{\beta \left( t\right) \ast }\right)
dt\right] .
\end{equation*}

\subsection{Photon Counting Measurement}

We now have
\begin{equation*}
dY\left( t\right) =d\Lambda \left( t\right) +L_{t}^{\ast }S_{t}dB\left(
t\right) +S_{t}^{\ast }L_{t}dB\left( t\right) ^{\ast }+L_{t}^{\ast }L_{t}dt
\end{equation*}
so that
\begin{eqnarray*}
I &=&\mathbb{E}_{\beta }\left[ \left\{ \mathcal{F}_{t}\left( X\right)  +\mathcal{H}_{t}\left( X\right) \left( \left| \beta _{t}\right|
^{2}+L_{t}^{\ast }S_{t}\beta _{t}+S_{t}^{\ast }L_{t}\beta _{t}^{\ast
}+L_{t}^{\ast }L_{t}\right)  \right\} C\left( t\right) \right] dt 
 \mathbb{E}_{\beta }\left[ \left\{  -\left( \mathcal{L}^{\beta
\left( t\right) }X\right) _{t} \right\} C\left( t\right) \right] dt \\
II &=&\mathbb{E}_{\beta }\left[ \left( \pi _{t}\left( X\right) -X_{t}\right)
f\left( t\right) C\left( t\right) \left( \left| \beta _{t}\right|
^{2}+L_{t}^{\ast }S_{t}\beta _{t}+S_{t}^{\ast }L_{t}\beta _{t}^{\ast
}+L_{t}^{\ast }L_{t}\right) \right] dt, \\
III &=&\mathbb{E}_{\beta }\left[ \mathcal{H}_{t}\left( X\right) \left(
\left| \beta _{t}\right| ^{2}+L_{t}^{\ast }S_{t}\beta _{t}+S_{t}^{\ast
}L_{t}\beta _{t}^{\ast }+L_{t}^{\ast }L_{t}\right) f\left( t\right) C\left(
t\right) \right] dt \\
&&-\mathbb{E}_{\beta }\left[ \left\{ \left( \mathcal{L}_{01}X\right)
_{t}S_{t}^{\ast }L_{t}+\left( \mathcal{L}_{01}X\right) _{t}\beta _{t} \right\} f\left(
t\right) C\left( t\right) \right] dt\\
&&-\mathbb{E}_{\beta }\left[ \left\{ \left( 
\mathcal{L}_{11}X\right) _{t}\left| \beta _{t}\right| ^{2}+\left( \mathcal{L}%
_{11}X\right) _{t}S_{t}^{\ast }L_{t}\beta _{t}^{\ast } \right\} f\left(
t\right) C\left( t\right) \right] dt.
\end{eqnarray*}
This time, the identity $I+II+III=0$ implies
\begin{eqnarray*}
\pi _{t}\left( \left( \pi _{t}\left( X\right) -X_{t}\right) L_{t}^{\beta
\left( t\right) \ast }L_{t}^{\beta \left( t\right) }+\mathcal{H}_{t}\left(
X\right) L_{t}^{\beta \left( t\right) \ast }L_{t}^{\beta \left( t\right)
}\right) 
-\pi_t \left( \left( \mathcal{L}_{01}X\right) _{t}S_{t}^{\ast }L_{t}^{\beta \left(
t\right) }+\left( \mathcal{L}_{11}X\right) _{t}S_{t}^{\ast }L_{t}^{\beta
\left( t\right) }\beta _{t}^{\ast }\right)  &=&0, \\
\pi _{t}\left( \mathcal{F}_{t}\left( X\right) +\mathcal{H}_{t}\left(
X\right) L_{t}^{\beta \left( t\right) \ast }L_{t}^{\beta \left( t\right)
}-\left( \mathcal{L}^{\beta \left( t\right) }X\right) _{t}\right)  &=&0.
\end{eqnarray*}
Again, after a little algebra, we find that
\begin{eqnarray*}
\mathcal{H}_{t}\left( X\right)  &=&\frac{\pi _{t}\left( L^{\beta \left(
t\right) \ast }XL^{\beta \left( t\right) }\right) }{\pi _{t}\left( L^{\beta
\left( t\right) \ast }L^{\beta \left( t\right) }\right) }-\pi _{t}\left(
X\right) , \\
\mathcal{F}_{t}\left( X\right)  &=&\pi _{t}\left( \mathcal{L}^{\beta \left(
t\right) }X\right) -\mathcal{H}_{t}\left( X\right) \pi _{t}\left( L^{\beta
\left( t\right) \ast }L^{\beta \left( t\right) }\right) ,
\end{eqnarray*}
so that the equation (\ref{filter ansatz}) reads as
\begin{equation*}
d\pi _{t}\left( X\right) =\pi _{t}\left( \mathcal{L}^{\beta \left( t\right)
}X\right) dt+\mathcal{H}_{t}\left( X\right) \left[ dY\left( t\right) -\pi
_{t}\left( L^{\beta \left( t\right) \ast }L^{\beta \left( t\right) }\right)
dt\right] .
\end{equation*}

In both cases, the form of the filter is identical to what we found using
the reference probability approach.

\section{Conclusion}

Both the quadrature filter (\ref{quadfilter}) and the photon counting filter
(\ref{numberfilter}) take on the same form as in to the vacuum case and of course reduce to these filters when we set $%
\beta \equiv 0$. In both cases it is clear that averaging over the output
gives 
\begin{equation*}
\mathbb{E}^{\beta }\left[ d\pi _{t}\left( X\right) \right] =\mathbb{E}%
^{\beta }\left[ j_{t}\left( \mathcal{L}^{\beta (t)}X\right) \right] dt
\end{equation*}
which is clearly the correct unconditioned dynamics in agreement with (\ref
{averaged}), and we obtain the correct master equation.

\bigskip

It is worth commenting on the fact that the pair of equations now replacing
the dynamical and observation relations (\ref{pairsys},\ref{pairobs}) are
the Heisenberg-Langevin equation (\ref{dynamical}) and the appropriate
component of the input-output relation (\ref{output}). The process and
observation noise have the same origin however the nature of the quantum
filtering based on a non-demolition measurement scheme results in a set of
equations that resemble the uncorrelated classical Kushner-Stratonovich
equations.

\subsection{Is Quantum Filtering still a Pure Filtering Problem?}
The form of the input-output relations (\ref{output}) might suggest that it
is possible to learn something about the system dynamics by examining the
quadratic variation of the output process, however, this is not the case! We
in fact have an enforced ``too good to be true'' situation here as the output
fields satisfy the same canonical commutation relations as the inputs with
the result that the quantum It\={o} table for the output processes $B_{%
\mathrm{out}}$, $B_{\mathrm{out}}^{\ast }$ and $\Lambda _{\mathrm{out}}$ has
precisely the same structure as table \ref{QIT}. Therefore we always deal
with a pure filtering theory in the quantum models considered here.

\section*{Acknowledgments}

\par\bigskip\noindent
{\bf Acknowledgment.} 
The authors gratefully acknowledge the support of the UK Engineering and Physical Sciences Research Council under grant EP/G039275/1.

\bibliographystyle{amsplain}

\end{document}